\newtheorem{definition}{Definition}
\newtheorem{proposition}{Proposition}
\newtheorem{proof}{Proof}
\DeclareMathOperator*{\argmax}{\arg\!\max}
\tikzset{
    position/.style args={#1:#2 from #3}{
        at=(#3.#1), anchor=#1+180, shift=(#1:#2)
    }
}
\newcommand{\specialcell}[2][c]{%
  \begin{tabular}[#1]{@{}c@{}}#2\end{tabular}}
\title{Vocabulary Alignment in Openly Specified Interactions}
\author[1,2]{Paula Chocron}
\author[1]{Marco Schorlemmer}
\affil[1]{Artificial Intelligence Research Institute (IIIA-CSIC), Bellaterra, Catalonia, Spain}
\affil[2]{Universitat Aut\`onoma de Barcelona, Bellaterra, Catalonia, Spain}
\date{}
\begin{document}

\maketitle
\begin{abstract}
The problem of achieving common understanding between agents that use different vocabularies has been mainly addressed by designing techniques that explicitly negotiate mappings between their vocabularies, requiring agents to share a meta-language. In this paper we consider the case of agents that use different vocabularies and have no meta-language in common, but share the knowledge of how to perform a task, given by the specification of an interaction protocol. For this situation, we present a framework that lets agents learn a vocabulary alignment from the  experience of interacting. Unlike previous work in this direction, we use open protocols that constrain possible actions instead of defining procedures, making our approach more general. We present two techniques that can be used either to learn an alignment from scratch or to repair an existent one, and we evaluate experimentally their performance. 
\end{abstract}

\section{Introduction}

Addressing the problem of vocabulary heterogeneity is necessary for the common understanding of agents that use different languages, and therefore crucial for the success of multi-agent systems that act jointly by communicating. This problem has been tackled several times in the past two decades, in general from two different perspectives. Some approaches \cite{Steels1998,diggelen2007ontology} consider the existence of external \emph{contextual} elements that all agents perceive in common, and explore how those can be used to explain the meaning of words. A second group of techniques \cite{Santos2016, Silva2005} (and also \cite{diggelen2007ontology}, where both approaches are combined) consider the situation, reasonable in agents that communicate remotely, in which this kind of context is not available. They do so by providing explicit ways of learning or agreeing on a common vocabulary (or alignment between vocabularies). These techniques require agents to share a common meta-language that they can use to discuss about the meaning of words and their alignments. The complex question of how to communicate with heterogeneous interlocutors when neither a physical context nor a meta-language are available remains practically unexplored. 

\begin{figure}
\begin{center}

\begin{tikzpicture}[>=stealth',shorten >=1pt,auto,node distance=2cm]
  \node[state, scale=0.45]          (s0)      {\footnotesize $0$};
  \node[state, scale=0.45, position=0:{2.5cm} from s0]              (s1) [right of=s0]  {\footnotesize $1$};
  \node[state, scale=0.45, position=0:{1.8cm} from s1]              (s2) [right of=s1] {\footnotesize $3$};
  \node[state, scale=0.45, position=+5:{2.4cm} from s1, accepting]              (s3) [above right of=s1] {\footnotesize $2$};
  \node[state, scale=0.45, position=0:{2.4cm} from s2]              (s5){\footnotesize $5$};
 \node[state, scale=0.45, position=-15:{2.5cm} from s5, accepting]              (s6)  {\footnotesize $6$};
 \node[state, scale=0.45, position=+10:{3.5cm} from s5, accepting]              (s7)  {\footnotesize $7$};

  \path[->] (s0)  edge node {{\tiny W: to drink?}} (s1)
          (s1)  edge node {{\tiny C: wine }} (s2)
          (s1) edge node {{\tiny C: beer}} (s3)
          (s2) edge node {{\tiny W: color?}} (s5)          
          (s5) edge node [below] {{\tiny C: red}} (s6)
          (s5) edge node [above]  {{\tiny C: white}} (s7);

\end{tikzpicture}
\begin{tikzpicture}[>=stealth',shorten >=1pt,auto,node distance=2cm]
  \node[state, scale=0.45]          (s0)      {\footnotesize $0$};
  \node[state, scale=0.45, position=0:{2.5cm} from s0]              (s1) [right of=s0]  {\footnotesize $1$};
  \node[state, scale=0.45, position=0:{1.8cm} from s1]              (s2) [right of=s1] {\footnotesize $3$};
  \node[state, scale=0.45, position=+5:{2.4cm} from s1, accepting]              (s3) [above right of=s1] {\footnotesize $2$};
  \node[state, scale=0.45, position=0:{2.4cm} from s2]              (s5){\footnotesize $5$};
 \node[state, scale=0.45, position=-15:{2.5cm} from s5, accepting]              (s6)  {\footnotesize $6$};
 \node[state, scale=0.45, position=+10:{3.5cm} from s5, accepting]              (s7)  {\footnotesize $7$};

  \path[->] (s0)  edge node {{\tiny W: da bere?}} (s1)
          (s1)  edge node {{\tiny C: vino }} (s2)
          (s1) edge node {{\tiny C: birra}} (s3)
          (s2) edge node {{\tiny W: tipo?}} (s5)          
          (s5) edge node [below] {{\tiny C: rosso}} (s6)
          (s5) edge node  [above] {{\tiny C: bianco}} (s7);

\end{tikzpicture}
\end{center}
\caption{Protocols for Waiter (W) and Customer (C)}
\label{fig:issa}
\end{figure}
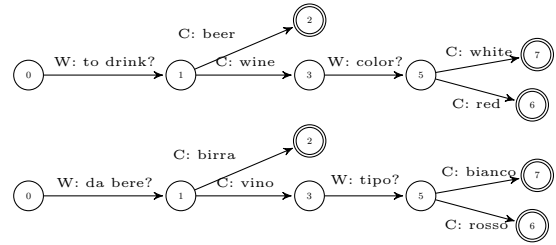


The work by Atencia and Schorlemmer \cite{Atencia2012} approaches this situation by considering a different kind of context, given by the interactions in which    agents engage. Agents are assumed to share the \emph{knowledge of how to perform a task}, or, more concretely, the specification of an interaction, given by a finite state automaton. For example, they consider the interaction in Figure \ref{fig:issa}, in which an English speaking customer and an Italian waiter communicate to order drinks. The authors show how agents can progressively learn which mappings lead to successful interactions from the experience of performing the task. After several interactions, agents converge to an alignment that they can use to always succeed at ordering and delivering drinks with that particular interlocutor. The interesting aspect of this idea is that the only shared element, namely the interaction specification, is already necessary to communicate. However, using finite state automata as specifications implies that agents need to agree on the exact order in which messages should be sent, which is unnecessarily restrictive. In addition, agents learn an alignment that is only useful for one task, and how it can be extrapolated to further interactions is not explored.

In this paper we present a general version of the \emph{interaction as context} approach to vocabulary alignment in multi-agent systems. Instead of automata, we consider agents that specify interactions with constrained-based \emph{open protocols} that define rules about what can be said instead of forcing one particular execution. In particular, we use ConDec protocols \cite{Pesic2006} that use linear temporal logic constraints. One ConDec protocol may not define completely the meaning of all words in a vocabulary, so we study how agents can learn mappings from performing different tasks. For this reason, the learning process is substantially different to the one in \cite{Atencia2012}, where a mapping that works in one task is considered correct. If enough experiences are available, agents that use the techniques we present converge to an alignment that is useful in the general case, even for future interactions that they do not yet know. Since  learning is done gradually as different interaction opportunities appear, agents can use what they learned early, even if they still do not know the alignment completely.

After presenting \emph{open interaction protocols}, we define a framework for interacting with  partners that use different vocabularies. We later present two different techniques to learn an alignment from the experience of interacting. The first relies only on analysing if a message is allowed or not in a particular moment, and it can be used in any constraint-based protocol.  The second technique uses the semantics of the protocols we present to improve the performance of the first one. Both methods can be used to learn alignments from scratch when there is no information, as well as to repair alignments obtained with other methods. We evaluate experimentally the techniques when used for both purposes, and show how different factors affect their performance. 

\section{Open Interaction Protocols}
\label{sec:prot}
The question of what should be expressed by an interaction protocol has been extensively discussed in the multi-agent systems community. Traditional approaches such as finite state automata and Petri Nets are simple to design and read, but also too rigid. More flexible alternatives constrain possible actions instead of defining a fixed procedure. \emph{Constraint Declarative Protocols} (commonly known as ConDec protocols) are an example of these approaches. ConDec protocols were first proposed by Pesic and van der Aalst \cite{Pesic2006} as a language to describe business protocols, and then used as specifications for agent interactions by Montali \cite{montali2010specification}, who presented an extension of ConDec and tools for its verification, and by Baldoni et al.\ \cite{Baldoni2010,Baldoni2015}, who integrated ConDec constrains and commitment protocols, a framework to specify interactions with social semantics first proposed by Singh \cite{singh2000asocial}. An important advantage of ConDec protocols is that they use linear temporal logic, a well-known logic for which many tools are available.

Linear temporal logic (LTL from now on) is a natural choice to express constraints about actions that occur in time. The syntax of LTL includes the one of propositional logic, and the additional temporal operators $\{\square, \lozenge, \Circle, U\}$, that can be applied to any LTL formula. LTL formulae are interpreted over \emph{paths} in Kripke structures, which are sequences of states associated to a truth-valuation of propositional variables. Temporal operators are interpreted in these paths as follows: $\square p$ means that $p$ must be true in the truth-valuation of all states, $\lozenge p$ means that $p$ must be true eventually, $\Circle p$ means that $p$ must be true in the next state, and $p \, U \, q$ means that $p$ must be true until $q$ is valid. A set of LTL formulae, called a \emph{theory}, is \emph{satisfiable} if there exists a path for which all the formulae are true. In that case, the path is a \emph{model} of the theory. The satisfiability problem in LTL is decidable, as well as the \emph{model checking} problem, which consists in deciding if a given path is a model of a theory. 

A ConDec protocol is a tuple $\langle M, C \rangle$, where $M$ is a set of action names and $C$ is a set of constraints about how actions can be performed. Constraints are LTL sentences renamed conveniently. We use a minimal version\footnote{ Since we are not interested in the usability of the protocols here, we do not include some constraints that work as syntactic sugar, such as $exactly(n,a)$, that can be replaced by including $existence(n,a)$ and $absence(n, a)$.} of the constraints introduced originally by Pesic and van der Aalst that we show in Table \ref{table:condec}, where $n \in \mathbb{N}$ and $a,b \in M$. These constraints are generally divided into three classes. Existential constraints ($existence$ and $absence$) predicate over the amount of times some action can be performed, relational constraints describe binary relations between two actions, and negation constraints (preceded by a `!' sign) are relations that do not hold. Given a set $M$ of actions, $Cons(M)$ is the set of all possible constraints over $M$. In a protocol $\langle M, C \rangle$, necessarily $C \subseteq Cons(M)$.

\begin{table}[h!]
\begin{center}
\bgroup
\def\arraystretch{1.3}
\begin{tabular}{ |c | c | }\hline
  Constraint & LTL meaning \\ \hline
  $existence(1,a)$ &  $\lozenge a$\\
  $existence(n+1,a)$ &  $ \lozenge (a \land \Circle existence(n,a))$\\
  $absence(0, a)$ & $\neg existence(1,a)$ \\
  $absence(n, a)$ & $\neg existence(n+1,a)$ \\
  $correlation(a,b)$ & $\lozenge a \implies \lozenge b$ \\
  $!correlation(a,b)$ & $\lozenge a \implies \neg \lozenge b$\\
  $response(a,b)$ & $\square (a \implies \lozenge b)$ \\
  $!response(a,b)$ & $\square (a \implies \neg \lozenge b)$\\
  $before(a,b)$ & $\neg b U a$ \\
  $!before(a,b)$ &  $\square (\lozenge b \implies \neg a$)\\
  $premise(a,b)$ & $\square (\Circle b \implies a)$ \\
  $!premise(a,b)$ & $\square (\Circle b \implies \neg a)$\\
  $imm\_after(a,b)$ & $\square (a \implies \Circle b)$ \\
  $!imm\_after(a,b)$ & $\square (a \implies \Circle \neg b)$ \\ \hline
\end{tabular}
\egroup
\end{center}
\caption{LTL definitions of constraints }
\label{table:condec}
\end{table}


\subsection{Open Protocols as Interaction Protocols}
In the rest of this section we present the technical notions that are necessary to use ConDec protocols as specifications of interactions between agents that may use different vocabularies. We first define \emph{interaction protocols}, that constrain the way in which agents can utter messages. We introduce the notion of \emph{bound} to capture the knowledge of agents about how many steps they have to finish the interaction. If this knowledge is not available it can be omitted and replaced for the fact that the interaction must finish in finite steps. 

\begin{definition}
  Given a set $A$ of agent IDs  and a propositional vocabulary $V$, an \emph{interaction protocol} is a tuple $\mathfrak{P} = \langle M, C, b  \rangle$, where the set of actions $M = A \times V$ is a set of messages formed by the ID of the sender agent and the term it utters, $C \subseteq Cons(M)$, and $b \in \mathbb{N}$ is the bound.
\end{definition}

The semantics of an interaction protocol is defined over interactions that represent a sequence of uttered messages.

\begin{definition}
	Given a set of messages $M$, an \emph{interaction} $i \in M^{*}$ is a finite sequence of messages $m \in M$. The length of an interaction ($len(i)$) and the append operation ($i  \, . \,  m$) are defined in the same way as for sequences. 
\end{definition}


An interaction $i$ can be encoded into a Kripke structure path by including one propositional variable for each message $m \in M$ and considering the following sequence of states with truth-valuations. In states with index $0 \leq j < len(i)$, let the propositional variable for the $j$-th message in $i$ be true, and all other propositional variables be false. For states after $len(i)$, let all propositional variables be false. With this construction, we can define the semantics of interaction protocols.

\begin{definition}
\label{def:model}
An interaction $i$ is a \emph{model} of an interaction protocol $\mathfrak{P} = \langle M, C, b  \rangle$ (noted $i \models \mathfrak{P}$) if $len(i) \leq bound$, and the Kripke path that encodes $i$ is a model of $C$. An interaction $i'$ is a \emph{partial model} (noted $i' \models_{p} \mathfrak{P}$)  of $\mathfrak{P}$ if it is a prefix of a model of $\mathfrak{P}$. 
\end{definition}


 Definition \ref{def:model} implies that checking satisfiability of an interaction protocol is equivalent to checking LTL satisfiability.

As already mentioned, we are interested in agents that use different vocabularies, but share the \emph{knowledge of how to perform a task}. In the rest of this section we define more precisely what that means. From now on, $V_1$ and $V_2$ are two possibly different vocabularies. Let us start by defining the notion of vocabulary alignment.

\begin{definition}
An \emph{alignment} is a function between vocabularies $\alpha : V_2 \rightarrow V_1$. Given a set of agents $A$, $\alpha$ can be extended homomorphically to a function between:
\begin{itemize}
  \item[-] messages $M_1 = A \times V_1$ and  $M_2 = A \times V_2$ ($ \alpha : M_2 \rightarrow M_1$)
  \item[-] constraints ($\alpha : Cons(M_2) \rightarrow Cons(M_1)$) and sets of constraints ($\alpha : 2^{Cons(M_2)} \rightarrow 2^{Cons(M_1)}$)
  \item[-] interactions ($\alpha : M_2^{*} \rightarrow M_1^{*}$) and sets of interactions ($\alpha : 2^{M_2^{*}} \rightarrow 2^{M_1^{*}}$)    
\end{itemize}

\end{definition}




To capture the idea of \emph{sharing the knowledge of how to perform a task}, we define notion of \emph{compatibility} between protocols, which consists simply on having the same models modulo an alignment. Given a protocol $\mathfrak{P} = \langle M, C, b \rangle$, we define the set of its models as $Int(\mathfrak{P}) = \{i \in M^{*} \text{ such that } i \models \mathfrak{P} \}$.

\begin{definition}
Two interaction protocols $\mathfrak{\mathfrak{P_1}} = \langle M_1, C_1, b \rangle$ and $\mathfrak{P_2} = \langle M_2, C_2, b \rangle$ with sets of messages $M_1 = A \times V_1$ and $M_2 = A \times V_2$ are \emph{compatible} if there exists an alignment $\alpha: V_2 \rightarrow V_1$ such that $$Int(\mathfrak{\mathfrak{P_1}}) = \alpha(Int(\mathfrak{P_2}))$$ If we know the alignment $\alpha$ for which the condition holds, we can say they are \emph{compatible under $\alpha$}.
\end{definition}

In this paper, for simplicity, we will restrict to using only bijective alignments over vocabularies of the same size.


\paragraph{An Example}
Let us extend the example of Figure \ref{fig:issa} by considering again a waiter W and  a customer C  and the \emph{ordering drinks} situation. The vocabulary of the customer is $V_C = $ \{\textsf{to drink, beer, wine, water, size, pint, half pint}\}, while the \emph{Waiter} uses $V_W = $ \{\textsf{da bere, birra, vino, acqua, tipo, media, piccola}\}. Consider the bijective alignment $\alpha : V_W \rightarrow V_C$ obtained by mapping each word in $V_W$ with the word in the same position in $V_C$. We consider the following protocols to specify the \emph{ordering drinks} interactions:

\begin{center}
$\mathfrak{P}_W =  \langle \{W,C\} \times V_W, $\\$ \{ existence(\langle W, \mathsf{da \;  bere} \rangle, 1, 1), $\\$ premise(\langle C,  \mathsf{birra} \rangle,  \langle W, \mathsf{da \; bere} \rangle), $\\$ premise(\langle C, \mathsf{vino} \rangle,  \langle W, \mathsf{da \; bere} \rangle), $\\$ premise(\langle C, \mathsf{acqua} \rangle, \langle W, \mathsf{da \; bere}\rangle), $\\$ !correlation(\langle C, \mathsf{birra} \rangle,  \langle W,\mathsf{vino} \rangle), $\\$ response(\langle C, \mathsf{birra} \rangle, \langle W, \mathsf{tipo} \rangle),$\\$ premise(\langle C, \mathsf{piccola} \rangle, \langle W, \mathsf{tipo} \rangle), $\\$ premise(\langle C, \mathsf{media}   \rangle, \mathsf{tipo} \rangle)\}, 5 \rangle$ \vspace{0.2cm}

$\mathfrak{P}_C = \langle \{W,C\} \times V_C, $\\$ \{ existence(\langle W, \mathsf{to \; drink} \rangle, 1, 1), $\\$ premise(\langle C,$\\$  \mathsf{beer} \rangle,  \langle W, \mathsf{to \; drink} \rangle), $\\$premise(\langle C, \mathsf{water} \rangle, \langle W, \mathsf{to \; drink} \rangle), $\\$ premise(\langle C, \mathsf{wine} \rangle, \langle W, \mathsf{to \; drink}), $\\$ response(\langle C, \mathsf{beer} \rangle,\langle W,  \mathsf{size} \rangle), $\\$ premise(\langle C, \mathsf{half pint} \rangle,\langle W,  \mathsf{size} \rangle), $\\$ premise(\langle C, \mathsf{pint}\rangle, \langle W, \mathsf{size}\rangle) \}, 5 \rangle$

\end{center}

The two protocols above are not compatible under any bijective alignment, since the customer protocol has as model an interaction in which the customer orders wine \emph{and} beer, while the waiter protocol has as models interactions in which the waiter only accepts one beverage. If $!correlation(\text{beer}, \text{wine}) $ is added to $\mathfrak{P}_C$, the resulting protocols are compatible, in particular under $\alpha$.

\section{Communicating with Heterogeneous Partners}
\label{sec:comm}

We focus on interactions between two agents $a_1$ and $a_2$ with vocabularies $V_1$ and $V_2$ respectively.
 During an interaction, an agent can send messages composed of words in its vocabulary  and receive others from its interlocutor, or finish the communication if certain conditions hold. We assume messages are never lost and always arrive in order, and more strongly, that each message is received before the following one is uttered. This requires that agents agree in who speaks at each time, although we do not force them to follow any particular turn-taking pattern.

Agents interact to perform some task together, that each of them specifies with an interaction protocol. We assume $a_1$ and $a_2$ agree on a set of tasks that they can perform and on \emph{how} they are performed: their respective protocols for each task are compatible. Our agents use their vocabulary consistently throughout different tasks, i.e., the protocols for all tasks are compatible under the same alignment. 

From now on, we will use sets of messages $M_1 = \{a_1,a_2\} \times V_1$ and $M_2 = \{a_1,a_2\} \times V_2$. We assume there exists a bijective alignment $\alpha: V_2 \rightarrow V_1$ such that, for each task that $a_1$ and $a_2$ can perform, they have protocols $\mathfrak{P}_1: \langle M_1, C_1, b \rangle$ and $\mathfrak{P}_2: \langle M_2, C_2, b \rangle$ respectively, and $\mathfrak{P}_1, \mathfrak{P}_2$ are compatible under $\alpha$. We assume that whenever agents interact, they use protocols that correspond to the same task.




We present a general approach to learn the alignment $\alpha$ from the experience of interacting to perform different tasks sequentially. The methods we present are used by one agent alone, and do not require that its interlocutor uses them as well. To explain the techniques, we adopt the perspective of agent $a_1$. 

Briefly, we propose to learn $\alpha$ by taking into account the coherence of messages. When $a_1$ receives a message, it learns from analysing which interpretations are allowed by the protocol and which are not. This information, however, is not definitive. An allowed word can be an incorrect interpretation for a received message, because protocols do not restrict completely all possible meanings. Moreover, a forbidden message can still be a correct interpretation. Since some rules express relations between messages, a previously misinterpreted word (by the agent or by its interlocutor) can make a correct mapping be impossible.

To handle this uncertainty, we propose a simple probabilistic learning technique. Agents maintain an \emph{interpretation distribution} over possible meanings for foreign words, and update their values with the experience of interacting. Formally, for each $v_2 \in V_2$ that $a_1$ knows about (because it received it in a message at some point) and for each $v_1 \in V_1$, agent $a_1$ has a weight $\omega(v_2, v_1)$ that represents its confidence in that $\alpha(v_2) = v_1$. Using the bijectivity, we will keep $\sum_{v \in V_1} \omega(v_2, v) = 1$, but we do not require the same in the other direction since the foreign vocabulary is unknown a priori.

The techniques we propose can incorporate existing alignments, in the same way as it is done in \cite{Chocron2016} for the case when protocols are automata. These alignments represent a priori knowledge about the foreign vocabulary, that can have been obtained from previous experience, from a matching tool, or with other techniques such as a syntactic similarity measure. Since these techniques are never fully correct, previous alignments should not be trusted completely, and the techniques we propose can work as methods for repairing them. A previous alignment for agent $a_1$ is a partial function $\mathcal{A} : V'_2 \times V_1 \rightarrow \mathbb{N}$, where $V'_2$ is a set of terms. Note that $\mathcal{A}$ is not necessarily defined over $V_2$, since the vocabulary that $a_2$ uses may be unknown a priori. However, the information it provides can be used even if $V'_2$ is a subset of $V_2$ or they overlap. We interpret the confidences always positively, meaning that a mapping with low confidence has still more confidence than one that does not exist. If $a_1$ has a previous alignment $\mathcal{A}$, it initializes the interpretation distribution using that information. They first assign raw values:

{\footnotesize
\begin{equation*}
\omega(v_2,v_1) =
\begin{cases}
    \mathcal{A}(v_2,v_1) & \mbox{if } (v_2,v_1) \in dom(\mathcal{A}) \vspace{0.2cm}\\ 
    0  & \mbox{otherwise}
\end{cases}
\end{equation*}
}

Then they normalize the values using an exponential method such as \emph{\textit{softmax}}. This is necessary to start the technique with the values for all mappings in the interval $(0,1)$, since the alignment can be wrong. If there is no previous alignment, they always initialize the interpretation values with an uniform distribution: $\omega(v_2,v_1) = \dfrac{1}{|V_1|}$ for all $v_1 \in V_1$.

We explain in detail the learning techniques that we propose in Section \ref{sec:learn}. In the rest of this section we focus on the dynamics of the interaction: how agents choose which messages to say, finish the interaction, and decide interpretations for the messages they receive.

\paragraph{Choosing messages to send}
The choice of messages to utter is internal of each agent and depends on its interests while interacting. We do not impose any restriction on this, besides respecting the constraints in the protocol.
Formally, $a_1$ can utter a message $v_1$ only if $\langle a_1, v_1 \rangle$ is a \emph{possible message} as defined below.

\begin{definition}
\label{def:pos}
	Consider an agent $a_1$ following protocol $\mathfrak{P_1} = \langle M_1, C_1, b \rangle$ and suppose interaction $i$ has happened so far (and $i \models_{p} \mathfrak{P_1}$). The \emph{possible messages} at that point are all messages $m \in M_1$ such that when they are performed the interaction remains a partial model, that is $i \, . \, m \models_{p} \mathfrak{P_1}$.
\end{definition}

\paragraph{Finishing the interaction}

 An interaction can finish in three situations. First, if there is a bound, reaching it implies the automatic end of the conversation. The interaction also finishes if an agent receives a message that has no possible interpretation. An agent can also finish the conversation whenever it wants if it considers that the interaction is \emph{successful}, i.e., if it  is a model of the protocol when it ends. In this case, it simply stops talking, producing a time-out that let the other agent realize the interaction finished.

\paragraph{Choosing interpretations}
When agent $a_1$ receives a message $v_2 \in V_2$ from $a_2$, it needs to interpret it in $V_1$ to be able to continue the interaction. To this aim, agents use the information in the interpretation distribution. Since agents assume there exists a bijective alignment, they always choose the same interpretation for a word during one interaction and they do not choose words that have been already chosen as interpretations.

Consider agent $a_1$ receives word $v_2$ after interaction $i$. Let $\mu : V_2 \rightarrow V_1$ be the \emph{mappings made} function, where $\mu(v_2) = v_1$ if and only if $v_1$ was chosen as a mapping for $v_2$ before in the same interaction. The domain $dom(\mu)$ are the $v_2 \in V_2$ that $a_1$ received since the current task started, and its image $img(\mu)$ are the $v_1 \in V_1$ that were chosen as mappings. The set $W$ of possible interpretations  for $v_2$ is the set of words $v_1$ in $V_1$ such that $\langle a_2, v_1 \rangle$ is a possible message after $i$, and such that $v_1 \not\in img(\mu)$.

If $rnd(S)$ is a function that chooses an element randomly in the set $S$, the agent will choose an interpretation as follows:

{\footnotesize
\begin{equation*}
\mu(v_2) =
\begin{cases}
     \mu(v_2) & \hspace{-1.8cm} \mbox{if } v_2 \in dom(\mu) \land \mu(v_2) \in W \vspace{0.2cm}\\ 
    rnd(\argmax_{v_1 \in W} \omega(v_2,v_1))  & \mbox{if }  v_2 \not\in dom(\mu) 
\end{cases}
\end{equation*}
}

If either $v_2 \in dom(\mu)$ but $\mu(v_2) \not\in W$, or $W = \emptyset$, the interaction is in one of the failure cases, and it finishes because $a_1$ stops talking.  


\section{Learning Alignments from Interactions}
\label{sec:learn}

 To learn an alignment from the experience of interacting, agents make the following \emph{well behaviour} assumptions about their interlocutor. Essentially, these assumptions imply that the dynamics of the interaction described in the previous section are respected. 

\begin{enumerate}
  \item \textbf{Compliance:} An agent will not utter a message that violates the constraints, or that makes it impossible to finish successfully the interaction in the steps determined by the bound (or in finite steps if there is no bound).
  \item \textbf{Non Abandonment:} An agent will not finish intentionally the conversation unless the constraints are fulfilled.
\end{enumerate} 

Agents learn when they decide how to interpret a received word. The overall method is simple: if $a_1$ receives $v_2$, it updates the value of $\omega(v_2,v_1)$ for all the words $v_1$ in a set $U \subseteq V_1$. The set $U$, unlike the set of possible interpretations $W$ used to decide the mapping, can have interpretations that are not possible messages in that particular moment, and the values in the interpretation distribution are updated according to that.

A first decision is how to choose the set $U$. Since checking satisfiability is computationally expensive, this can impact considerably the overall performance of the methods. A first option is to update the value of all possible words ($U = V_1$), which can be slow for large vocabularies. Another option is to use only the options that are considered until an interpretation is chosen, or all those words that have more value than the first possible interpretation. In this case, if $v_1$ is chosen as an interpretation for $v_2$, $U= \{ v \in V_1 \text{ such that } \omega(v_2, v) > \omega(v_2, v_1)\} $. A third possibility is a midpoint between these two, in which $U = \{ v \in V_1 \text{ such that } \omega(v_2, v) \geq \omega(v_2, v_1)\}$. This option updates also the words that have the same value as the chosen interpretation. If the distribution is uniform when the learning starts, this updates many possible interpretations in the beginning and fewer when there is more information. This is the approach we choose, and the one used for the evaluation.



We present two methods to update the values in the interpretation distribution. The first one is a general technique that only takes into account whether messages are possible. This method is not exclusively designed for protocols that use LTL, and can be used for constraints specified in any logic for which agents have a satisfiability procedure.  In a second method we show how the particular semantics of the protocols we introduced can be taken into account to improve the learning  process.

 
\subsection{Simple Strategy}

The first approach consists in updating the value of interpretations for a received message, according to whether it is coherent or not with the protocol and the performed interaction.
When a new message $v_2$ is received, the values for all interpretations are initialized as described in the last section. We use a simple learning method in which updates are a punishment or reward that is proportional to the value to update. We chose this method because it incorporates naturally previous alignments, it is very simple to formulate, and the proportional part updated can be manipulated easily to consider different situations, something that will be relevant for the second technique. However, it is not necessarily the most efficient choice, and other methods, such as one that records a history, can converge faster. Consider a reward and a punishment parameter $r_r, r_p \in [0,1]$. We use the notion of \emph{possible messages} in Definition \ref{def:pos} and update the values for $v_1 \in U$ as follows:

{\footnotesize
\begin{equation*}
\omega(v_2, v_1) : =
\begin{cases}
     \omega(v_2, v_1) + r_r \cdot \omega(v_2, v_1) & \mbox{if } \langle a_2, v_1 \rangle \mbox{ is possible } \vspace{0.2cm}\\ 
     \omega(v_2, v_1) - r_p \cdot \omega(v_2, v_1) & \mbox{if }  \langle a_2, v_1 \rangle \mbox{ otherwise } \ 
 
\end{cases}
\end{equation*}
}

After all the updates for a given message are made, the values are normalised in such a way that  $\sum_{v \in V_1} \omega(v_2, v) = 1$. Either simple sum-based or \textit{softmax} normalization can be used for this.


\subsection{Reasoning Strategy}

\begin{table*}[h!]
\begin{center}
\bgroup
\def\arraystretch{1.3}
\begin{tabular}{ |c | c | }\hline
  Violated Constraint & Action \\ \hline
  - & $\omega(v_2, v_1) : = \omega(v_2, v_1) - r_p \cdot \omega(v_2, v_1)$  \\\hline
  $absence(\langle a_2, v_1 \rangle,n)$ & $\omega(v_2, v_1) : = 0$  \\\hline
  \specialcell{\specialcell{$!correlation(\langle a_2, v_1 \rangle,\langle a_2, v'_1 \rangle$) \\ $!response(\langle a_2, v_1 \rangle,\langle a_2, v'_1 \rangle)$ } \\ \specialcell{$!before(\langle a_2, v_1 \rangle,\langle a_2, v'_1 \rangle)$ \\  \specialcell{ $!premise(\langle a_2, v_1 \rangle,\langle a_2, v'_1 \rangle)$ \\ $!imm\_after(\langle a_2, v_1 \rangle,\langle a_2, v'_1 \rangle)$}}}  &  \specialcell{$\omega(v_2, v_1) : = \omega(v_2, v_1) - r_p \cdot \omega(\mu(v'_1), v'_1)$ \\ $  \omega(\mu(v'_1), v'_1) : = \omega(\mu(v'_1), v'_1) - r_p \cdot \omega(v_2, v_1) $} \\\hline 

 \specialcell{$ premise(\langle a_2, v_1 \rangle,\langle a_2, v'_1 \rangle)$ \\$imm\_after(\langle a_2, v'_1 \rangle,\langle a_2, v_1 \rangle)$  } & \specialcell{$\omega(v_2, v_1) : = \omega(v_2, v_1) - r_p \cdot \omega(\mu(v'_1), v'_1)$\\$  \omega(\mu(v'_1), v'_1) : = \omega(\mu(v'_1), v'_1) - r_p \cdot \omega(v_2, v_1) $}\\\hline
 $before(v',v)$ &  $\omega(v_2, v_1) : = \omega(v_2, v_1) - r_p \cdot \omega(v_2, v_1) $ \\\hline 
 $relation(\langle a_1, v_1 \rangle,\langle a_2, v'_1 \rangle)$ &
$\omega(v'_1, v'_1) : = \omega(v'_1, v'_1) - r_p \cdot \omega(v_2, v_1) $ \\  \hline
\end{tabular}
\egroup
\end{center}
\caption{Updates for each violated constraint when a message is received}
\label{table:updateM}
\end{table*}

The simple strategy does not make use of some easily available information regarding which constraints were violated when a message is considered impossible. To illustrate the importance of this information, consider an interaction in which the Customer said \textsf{water} but the Waiter interpreted it as \textsf{vino}. If the Customer says \textsf{beer}, the Waiter may think that interpreting it as \textsf{birra} is impossible, since ordering two alcoholic beverages is not allowed ($!correlation(\mathsf{birra},\mathsf{vino})$ would be violated). However, this is only due to misunderstanding \textsf{water} in the first place. In the reasoning approach, agents make use of the semantics of violated rules to perform a more fine-grained update of the values. Before presenting the method, we need to divide constraints in two kinds, that will determine when agents can learn from them.

\begin{definition}
	A constraint $c$ is semantically \emph{non-monotonic} if there exist interactions $i$ and $i'$ such that $i$ is a prefix of $i'$, and $i \models c$ but $i' \not\models c$. A constraint $c$ is \emph{semantically monotonic} if this cannot happen, and $i \models c$ implies that also all its extensions are models of $c$.
\end{definition}

The following proposition divides the constraints in our protocols between monotonic and non-monotonic.

\begin{proposition}
  The constraints $existence$, $coexistence$ and $response$ are semantically monotonic. All the rest of the constraints defined are semantically non-monotonic.  
\end{proposition}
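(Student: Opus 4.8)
The plan is to prove this by reading ``$i \models c$'' in the sense relevant to the interaction dynamics: that the path encoding $i$ can still be extended, by appending further messages, to a full model of $c$ (equivalently, that $c$ is not yet \emph{permanently} violated along $i$). Under this reading $c$ is monotonic precisely when completability can never be destroyed by appending a single message, and non-monotonic when some one-message extension $i \, . \, m$ turns a completable interaction into a permanently violated one. I expect fixing this reading to be the main obstacle, because under the naive ``the finite path of $i$ literally satisfies the LTL formula'' reading the statement is false --- for instance $\langle a,b,a\rangle$ literally falsifies $\square(a \implies \lozenge b)$ even though $\langle a,b\rangle$ satisfies it --- so one must commit to the runtime notion in which eventualities left pending are not yet counted as violations. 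I also note that the constraint the statement calls $coexistence$ must be the table's $correlation$, $\lozenge a \implies \lozenge b$; this identification is forced, since $correlation$ is itself monotonic and so cannot belong to ``the rest''.

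First I would record the one structural fact the whole argument rests on: appending messages to $i$ to obtain an extension $i'$ leaves the truth valuation of every state of index $< len(i)$ unchanged and merely converts previously all-false tail states into one-hot message states. Consequently no proposition that was true at an early state becomes false, and the number of occurrences of any action can only grow along extensions.

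For the three monotonic constraints I would show directly that \emph{every} finite interaction is completable, so that completability is universal and hence trivially preserved. Given any $i$: for $existence(n,a)$ append enough copies of $a$; for $coexistence(a,b) = \lozenge a \implies \lozenge b$ append one $b$, which makes $\lozenge b$ true and discharges the implication whatever $a$ does; and for $response(a,b) = \square(a \implies \lozenge b)$ append a single trailing $b$, which sits after every $a$ already uttered and thereby answers all pending occurrences at once. In each case a model is reached, so $i \models c$ holds for all $i$ and no extension can make it fail.

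For the remaining constraints I would exhibit, for each, a completable $i$ and a one-message extension that is permanently violated; the witnesses are uniform. The interaction $\langle a \rangle$ is a model of each of $!correlation$, $!response$, $!before$, $!premise$ and $!imm\_after$, yet $\langle a \rangle \, . \, b = \langle a,b\rangle$ writes an irreversible occurrence or adjacency into the past that each of these forbids, so none can ever be repaired. Similarly $\langle\rangle \, . \, a = \langle a\rangle$ permanently violates $absence(0,a) = \neg\lozenge a$; $\langle\rangle \, . \, b = \langle b\rangle$ permanently violates $before(a,b) = \neg b\,U\,a$, since the $b$ now precedes every future $a$; $\langle c \rangle \, . \, b$ with $c \neq a$ permanently violates $premise(a,b)$; and $\langle a \rangle \, . \, c$ with $c \neq b$ permanently violates $imm\_after(a,b)$. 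The only point needing care is checking, for the ordering constraints $before$, $premise$ and $imm\_after$, that the violation is genuinely permanent rather than merely pending --- which once more reduces to the completability reading fixed at the outset, since the offending occurrence or adjacency always lies in the part of $i \, . \, m$ that later extensions can no longer alter.
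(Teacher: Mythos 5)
Your proof is correct, and it is worth comparing with the paper's, because you diverge on the one point where the paper is most fragile. The paper proves non-monotonicity exactly as you do (a prefix that satisfies the constraint plus a one-message extension that permanently breaks it, illustrated only for $before$), but it proves monotonicity by the contrapositive ``if $i\,.\,m \not\models c$ then $i \not\models c$'' via occurrence counting, carried out only for $existence$ and read against the literal path semantics of Definition~3. That argument does not extend to $response$ or $correlation$: as you observe, $\langle a,b\rangle$ literally satisfies $\square(a \implies \lozenge b)$ while $\langle a,b,a\rangle$ does not, so under the literal reading the proposition is false for two of its three monotonic cases. Your move of re-reading $i \models c$ as completability (``$c$ is not yet permanently violated,'' i.e.\ $i$ is a prefix of some model of $c$) is the repair that makes the statement true, and it is clearly the intended notion --- it is what makes the paper's later use of the dichotomy coherent, since monotonic constraints are precisely those whose mid-interaction failure carries no information, and the paper's own definition of a \emph{violated} constraint is only meaningful for permanent violations. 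Your proof of the monotonic half (every finite interaction is completable for $existence$, $correlation$ and $response$, so preservation is vacuous) is a genuinely different and sounder argument than the paper's counting one, and your exhaustive list of witnesses for the non-monotonic half, including the uniform witness $\langle a\rangle \,.\, b$ for all five negation constraints, is more complete than the paper's single example. Your identification of the undefined $coexistence$ with the table's $correlation$ is also the only reading under which the partition is exhaustive. The only caveat is that you should state the completability reading as an explicit amendment to the paper's Definition of monotonicity rather than as an interpretive choice, since as written that definition points to the literal semantics under which the claim fails.
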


\begin{proof}
To prove non-monotonicity, it is enough to show an example that violates the constraint. For example, consider $before(m,m')$ and suppose an interaction $I$ such that $m \not\in i$ and $m' \not\in i$. Then $i \models before(m,m')$ and $i  \, . \,  m' \not\models before(m,m')$. Monotonicity can be proven by counterexample, showing that for a constraint $c$, if $i  \, . \,  m' \not\models c$ then necessarily $i \not\models c$. For example, if $i  \, . \,  m$ violates $existence(m', n)$, there are two options, either $m=m'$ or $m\neq m'$. Let $\#(m, i)$ and $\#(m, i  \, . \,  m')$ be the number of occurrences of $c$ in $i$ and $i  \, . \,  m'$ respective. In both cases, $\#(m, i  \, . \,  m') \geq \#(m, i)$, so if $\#(m, i  \, . \,  m') \leq n$, then necessarily $\#(m, i) \leq  n$.
\end{proof}

 Non-monotonic constraints can be used to learn while interacting, using the principle of compliance (if a constraint is violated, something must be wrong in the interpretations). Monotonic constraints could be used when the interaction ends, using the principle of non-abandonment (if not all constraints are satisfied, there must be an error). However, since our agents cannot communicate in any way, they ignore why the interaction ended, and therefore they do not know if their interlocutor considers the constraints were satisfied or not, making it very difficult to decide how to update values. In this work we focus on handling non-monotonic constraints, leaving monotonic ones for future work where an ending signal is introduced. To start, let us define formally when a non-monotonic constraint is violated.

\begin{definition}
  Given an interaction $i$ and an interpretation $v_1$ for a received message, a constraint $c$ is \emph{violated} if $i \models c$ but $i  \, . \,  \langle a_2, v_1 \rangle \not\models c$.
\end{definition}

If agent $a_1$ decides that $v_1$ is not a possible interpretation for $v_2$, let $Viol$ be all $c \in C_1$ that are violated by $v_1$. It is possible that $Viol = \emptyset$. This can happen when there are no violated constraints because a subset of constraints becomes unsatisfiable, for example, if the waiter has not said \textsf{size} and a protocol includes the rules $\{!response(\langle C, \mathsf{wine} \rangle,   \langle W, \mathsf{size} \rangle), $\\$  existence(\langle W, \mathsf{size} \rangle,1) \}$, the customer cannot say \textsf{wine} even when $Viol$ would be empty. Another situation in which a message can be impossible without any constraint being violated is when it makes the interaction impossible to finish successfully in the number of utterances indicated by the bound, or in finite time if there is no bound.

To explain the update in the reasoning strategy, consider again reward and a punishment parameter $r_r, r_p \in [0,1]$. If an interpretation $v_1$ is possible for a received message $v_2$, then $\omega(v_2, v_1) : = \omega(v_2, v_1) + r_r \cdot \omega(v_2, v_1)$. If it is not, it is updated as indicated in to Table \ref{table:updateM}, according to which rule was violated. These actions are performed iteratively for each broken rule. After the updates, all values are normalized to obtain $\sum_{v \in V_1} \omega(v_2, v) = 1$. In this case, we need to use a method that maintains the values that are $0$, so \textit{softmax} is not a possibility.

 Let us explain the motivation under each update. Violating the existential non-monotonic constraint is different to violating a relation constraint, because it expresses a condition over only one action, so if the chosen interpretation violates it, it must be wrong. For this reason, the value of the mapping is set to $0$.

Negation constraints express a relation over two actions, and if their are broken, necessarily both of them happened in the interaction, and if a constraint is violated, it could be because either of them was misinterpreted. To take this into account, we propose an update inspired in the Jeffrey's rule of conditioning \cite{Shafer1981Jeffrey}, that modifies the Bayes Rule of conditional probability for the cases when evidence is uncertain. If $Q$ is a probability distribution for a partition $E_1 \dots E_n$ of possible evidence, then Jeffrey's rule states that the posterior probability of $A$ is computed as:

\vspace{-0.3cm}
$$Q(A) = \sum_{i = 0}^{n} P(A|E_i) \cdot Q(E_i) $$

Back to our problem, let $v_2$ be a received message such that $\mu(v_2) = v_1$. Suppose $a_1$ receives a new message $v'_2$, and it discovers that mapping it with $v'_1$ violates a constraint $relation(v_1,v'_1)$. This means that $P(\alpha(v_2)= v_1 \land \alpha(v'_2)= v'_1) = 0$, and therefore $P(\alpha(v_2)= v_1 | \alpha(v'_2)= v'_1) = 0$. Then, if $Pos$ is the set of possible mappings, we can write:

\vspace{-0.3cm}
$$\omega(v_2, v_1) : = \omega(v_2, v_1) \cdot \sum_{v'_1 \in Pos} \omega(\alpha(v'_2, v'_1)$$

or, what is equivalent, if $Impos$ is the set of incompatible mappings, since $\sum_{v \in v_1} \omega(v_2, v_1) = 1$:

$$\omega(v_2, v_1) : = \omega(v_2, v_1) - \sum_{v'_1 \in Impos} \omega(v'_2, v'_1)$$
  
We implement this by subtracting for each mapping a value that is proportional to the confidence in the other mapping. Since agents do not explore all the possible interpretations, the value  is not computed exactly, but approximated considering only the interpretations that have more confidence.

When the interpretation depends on the interpretation of many other mappings, there are too many possibilities that difficult reasoning about which one can be wrong. In this case, the agents use a default punishment $r_p$. This occurs when the positive versions of $premise$ and $imm\_after$ are violated, when $Viol = \emptyset$, and when and a violated constraint depends on a message that $a_1$ sent, since they have no information about how their messages were interpreted by $v_2$.



Lastly, let us discuss the value of $r_p$. Choosing $r_p = \frac{1}{|V_1|}$, the technique has the effect of subtracting a larger value when the agent is confident in the correctness of the previous interpretation. This is the value that we use.

  



\section{Experimental Evaluation}
\label{sec:eval}
In this section we present the results of experiments that show how the methods we propose perform experimentally when used by agents with different vocabularies. Before presenting the results, let us discuss the generation of data for experimentation.

\begin{figure*}[h!]

  \begin{minipage}{.3\textwidth}
  \centering
   \includegraphics[scale=0.24]{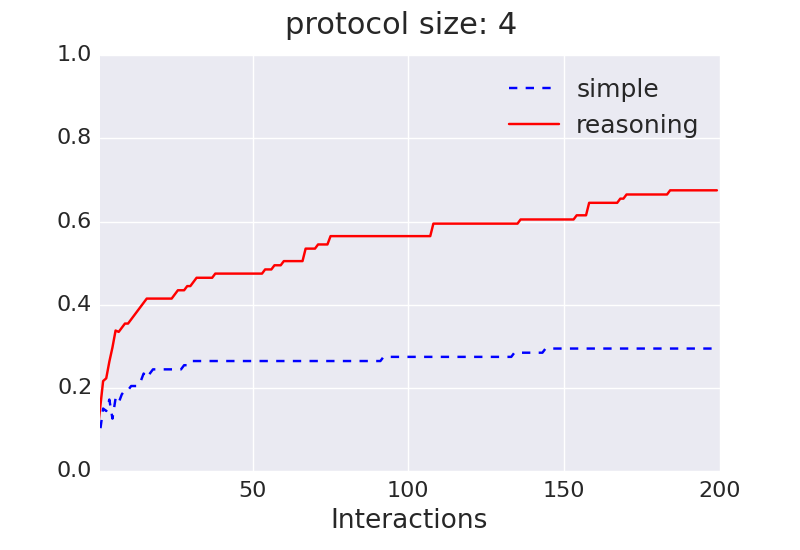}
   \label{fig:v10p06}
  \end{minipage} \quad
  \begin{minipage}{.3\textwidth}
  \centering
   \includegraphics[scale=0.24]{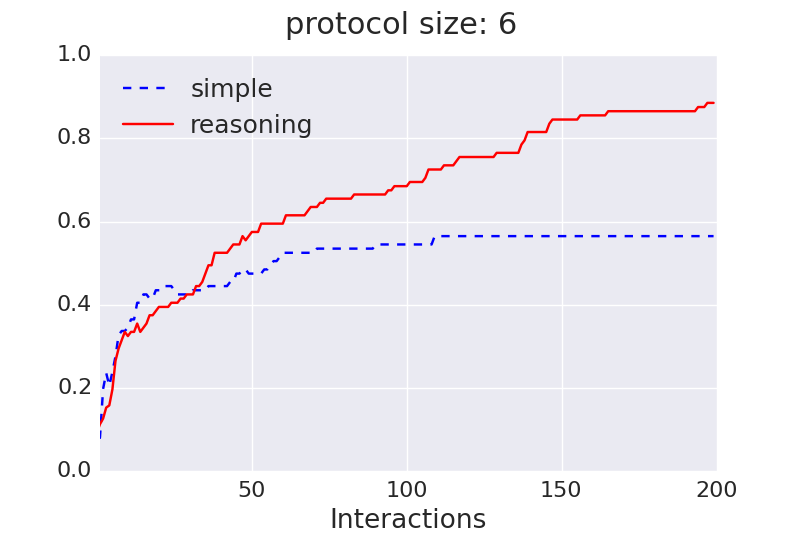}
   \label{fig:v10p06}
  \end{minipage} \quad
  \begin{minipage}{.3\textwidth}
 \includegraphics[scale=0.24]{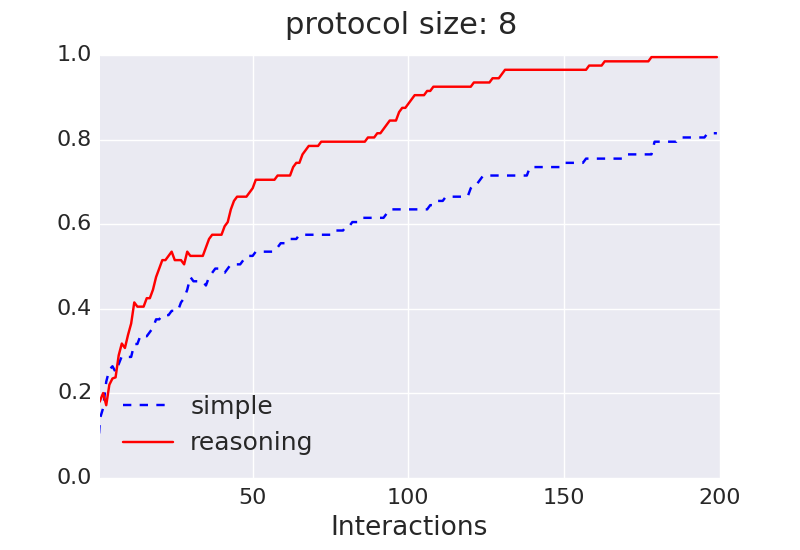}
   \label{fig:v10p08}
  \end{minipage}

\begin{minipage}{.6\textwidth}
  \centering
  \includegraphics[scale=0.24]{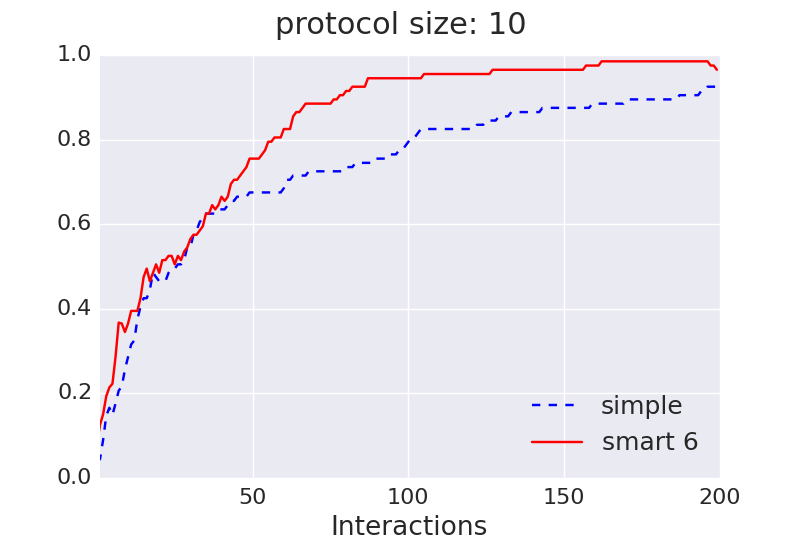}
   \label{fig:v10p10}
  \end{minipage} \begin{minipage}{.2\textwidth}
    \centering
    \vspace{0.05cm} \hspace{-2cm}
  \includegraphics[scale=0.24]{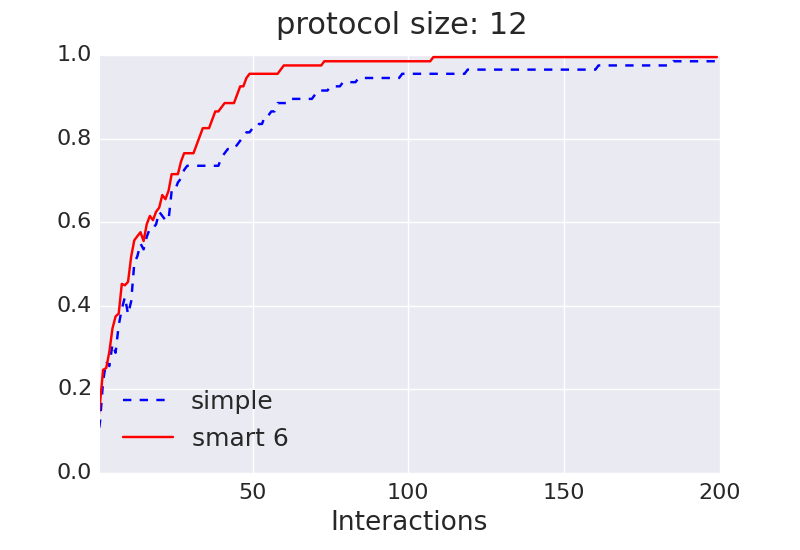}
   \label{fig:v10p12}
  \end{minipage}
 \caption{Results for a vocabulary of size 10}
 \label{fig:exp1}
\end{figure*}

Due to the lack of existing datasets, we performed all the experimentation with randomly generated data. We created protocols with different sizes of vocabulary and of set of constraints. To generate a protocol from a vocabulary, we randomly chose constraints and added them to the protocol if it remained satisfiable with the new rule. In the implementation, we express constraints over words instead of over messages, so a constraint are valid for any combination of senders. We used bounds in our experimentation to limit the possible constraints and to eventually finish the interactions between agents, that in all cases had the value of the vocabulary size. We created pairs of compatible protocols $\mathfrak{P_1},\mathfrak{P_2}$, with vocabularies $V_1$ and $V_2$ respectively by building a bijective alignment $\alpha : V_2 \rightarrow V_1$ and then using it to translate the constraints in a protocol $\mathfrak{P_2}$, obtaining $\mathfrak{P_1}$. We used the NuSMV model checker \cite{cimatti2002nusmv} to perform all the necessary satisfiability checks.

To evaluate how agents learn $\alpha$ from the experience of interacting, we first need to define a measure of how well an agent knows an alignment. Since agents always choose the possible mapping with highest weight, we can easily extract from the interpretation distribution an alignment that represents the first interpretation choice for each foreign word in the following interaction. This alignment, that we call $\alpha_{i}$ for agent $a_i$, is one of the mappings with highest weight for each foreign word. Formally, for $a_1$, the domain of $\alpha_1$ is $v_2 \in V_2$ such that $\omega(v_2,v)$ is defined for all $v \in V_1$, and $\alpha_1(v_2) = rnd(\argmax_{v \in V_1}\omega(v_2, v))$. Note that there are multiple possibilities with the same value, $\alpha_1(v_2)$ takes a random word between them.


To compare this alignment with $\alpha$, we used the standard precision and recall measures, and their harmonic mean combination, commonly known as \emph{F-score}. To use the standard definitions directly, we only need to consider alignments as relations instead of functions: given an alignment $\alpha: V_2 \rightarrow V_1$, the corresponding relation is the set of pairs $\langle v_2, v_1 \rangle$ such that  $\alpha(v_2) = v_1$.

\begin{definition}
\label{def:pir}
Given two alignments $\beta$ and $\gamma$ expressed as relations, the \emph{precision} of $\beta$ with respect to $\gamma$ is the fraction of the mappings in $\beta$ that are also in $\gamma$: $$\mathit{precision}(\beta,\gamma) = \dfrac{\mid \beta \cap \gamma \mid}{\mid \beta \mid}$$ while its \emph{recall} is the fraction of the mappings in $\gamma$ that were found by $\beta$: $$\mathit{recall}(\beta,\gamma) = \dfrac{\mid \beta \cap \gamma \mid}{\mid \gamma \mid}$$

Given an alignment $\beta$  and a reference alignment $\gamma$, the \emph{F-score} of the alignment is computed as follows: $$F-score(\beta,\gamma) = 2 \cdot \dfrac{\mathit{precision}(\beta,\gamma) \cdot \mathit{recall}(\beta,\gamma)}{\mathit{precision}(\beta,\gamma) + \mathit{recall}(\beta,\gamma)}$$
  \end{definition}

We performed experiments parametrized with a protocol and vocabulary size. In all experiments, agents are sequentially given pairs of compatible protocols, and after each interaction  we measure the F-score of their alignments with respect to $\alpha$. The same protocols can appear repeatedly in the sequence, but we consider that eventually new ones appear always in the interaction, which implies that at some point the meaning of all words is fixed. Our agents also used the NuSMV model checker for both checking satisfiability and finding violated constraints. We fist intended to perform experiments with the same vocabulary sizes that are used for testing in \cite{Atencia2012}, which are $5, 10, 15, 40, 80$. However, the interactions for size $80$ were too slow to be able to perform a reasonable amount of repetitions for each technique. This problem is intrinsic to the protocols we use (since agents have to decide if the messages they want to send are possible), and should be taken into account in future work. Since varying the vocabulary size did not provide particularly interesting results, we show here the experiments for a vocabulary of $10$ words and four different protocol sizes. We used punishment and rewards parameters of $r_p, r_r = 0.3$ for the simple strategy, which were best in a preliminary test. Each experiment was repeated 10 times. 




In a first experiment, we let agents interact 200 times and measured the way they learned the alignment for different protocol sizes, shown in Figure \ref{fig:exp1}. The \emph{F-score} if computed by averaging the \emph{F-score} of each agent that participates in the interaction. The curves show that the smart strategy is always better than the simple one, but this difference is more dramatic when the protocols are smaller. This is because there is less information, so using it intelligently makes a great difference. The curves are sharp in the beginning, when agents do not have much information about $\alpha$, and they make many mistakes from which they learn fast. When agents reach a reasonable level of precision and recall, mistakes are less frequent, and therefore the pace of the learning slows considerably. Although this affects the convergence to an F-score of $1.0$, it also means that, after a certain number of interactions, agents have learn enough to communicate successfully most of the times. The sparse mistakes that they will make in future conversations make them learn more slowly the remaining mappings. Table \ref{table:conv} shows how fast agents reach a reasonable level of F-score, that we chose of $0.8$ based on \cite{Euzenat2011}.


  

\begin{table}[h!]
\begin{center}
\bgroup
\def\arraystretch{1.3}
\begin{tabular}{ c c  c c c  }\hline
 & 6 & 8 & 10 & 12 \\ \hline
simple & - & 187 & 101 & 47  \\
smart & 139 & 87 & 57 & 33  \\
\end{tabular}
\egroup
\end{center}
\caption{0.8 convergence}
\label{table:conv}
\end{table}

In a second experiment, we studied the performance of our methods when used to repair existing alignments. To this aim, we created alignments with different values of precision and recall with respect to $\alpha$, that we divided into three categories. Low quality alignments had precision and recall 0.2, medium quality 0.5, and high quality 0.8. We only studied alignments with the same value of precision and recall to reduce the number of combinations  and have a clear division of quality levels. We made agents with this alignments interact, using the simple and the smart technique. The results can be seen in Figure \ref{fig:alg}. The number next to each agent class in the reference is the precision and recall of the alignment that was given to that agent. Again, the reasoning strategy performs particularly better when only little information is available; in this case, the repair is much better than when using the simple strategy for alignments of low quality.

\begin{figure}[h!]
\centering
  \includegraphics[scale=0.4]{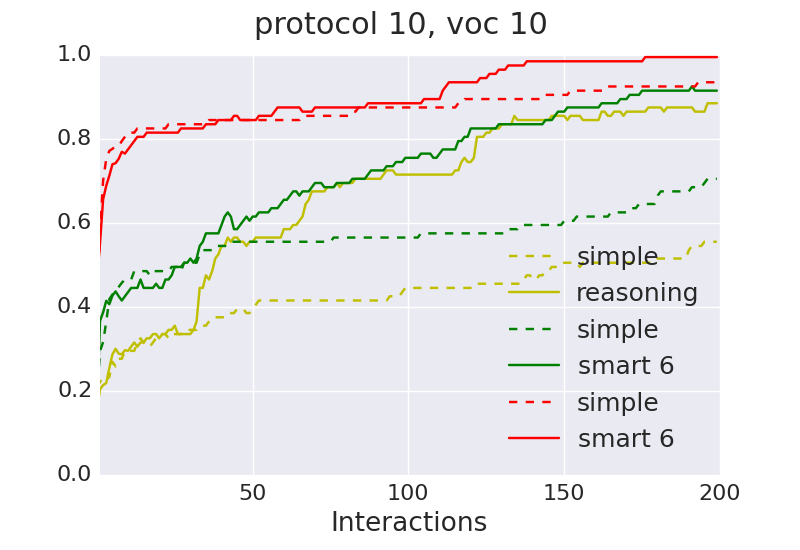}
 \caption{Results for different alignment qualities}
 \label{fig:alg}
\end{figure}

  

\section{Related Work}

As we already mentioned, a large portion of the existent work that tackles the problem of vocabulary alignment for multi-agent communication consists on techniques to let agents discuss and agree on a common vocabulary or alignment. Some of these approaches use argumentation or negotiation techniques \cite{Santos2016, laera2007argumentation}, performing an offline negotiation that takes place before actual interactions can start. Other approaches, such as the three-level schema developed by van Diggelen \cite{diggelen2007ontology}, discuss the meaning of works on a by-demand basis, that is, only when it is necessary to use them to perform some task. 

Other approaches use different kinds of grounding to learn alignments or the meaning of words from interactions. The well established research by Steels \cite{Steels1998} considers a situation in which learners share a physical environment, and have the ability of pointing to things to communicate if they are talking about the same things. Goldman et al.  \cite{Goldman2007}, investigated how agents can learn to communicate in a way that maximises rewards in an environment that can be modelled as a Markov Decision Process. In \cite{Barrett2014}, the authors study a version of the multi\-agent, multi\-armed bandit problem in which agents can communicate between each other with a common language, but message interpretations are not know. Closer to our approach is the work by Euzenat on cultural alignment repair \cite{Euzenat2014}, and particularly the already mentioned work by Atencia and Schorlemmer \cite{Atencia2012}, where the authors consider agents communicating with interaction protocols represented with finite state automata, and use a shared notion of task success as grounding. Our work, instead, considers only the coherence of the interaction as grounding. To the best of our knowledge, the work presented in this paper is the first approach in which agents learn languages dynamically from temporal interaction rules, taking into account only the coherence of the utterances. 

A very well studied problem consists on learning different structures from experience, such as grammars \cite{delaHiguera2010} or norms \cite{sen2007emergence}. Our approach can be seen as the reverse of these approaches, where some structure is considered shared and used to learn from.

\section{Conclusions and Future Work}

The techniques we propose allow agents to learn alignments between their vocabularies only by interacting, assuming that their protocols share the same set of models, without requiring any common meta-language or procedure. The assumption of sharing the models can be removed, but the learning will be slower because an extra component of uncertainty is added. Although our methods converge slowly, they use only very general information, and can be easily combined with other methods that provide more information about possible mappings, as we show with the integration of previous alignments. Our agents learn dynamically while interacting and the techniques achieve fast a reasonable level of knowledge of the alignment. Therefore, a possible way of using our techniques would be to go first through a training phase in which agents learn many mappings fast, and then start performing the real interactions, and keep learning to discover the remaining mappings. The technique that uses particular semantics of the protocols, as expected, improves the learning. These techniques can be particularly developed for each kind of protocols. Agents do not need to use the same techniques or even the same logic, as long as they share the models.


There exist many possible directions of research derived form this work. First, we could relax the assumption that agents do not share any meta-language, considering agents that can in some way exchange information about the interaction. For example, considering only that agents can communicate whether they finished a task successfully would make possible to reason about monotonic rules when an interaction ends. An approach like this would relate our work to the one by Santos \cite{Santos2016} about dialogues for meaning negotiation. Another possible extension consists in considering agents that prioritize the language learning to performing the task. In this case, agents would have to decide what utterances it is more convenient to make in order to get more information about the alignment.  An aspect that should be improved in future work is the performance in terms of runtime per interaction, since it was very slow for larger vocabularies.

\bibliographystyle{abbrv}
\bibliography{lib-corrected.bib}
\end{document}